\documentclass[letterpaper, 10 pt, conference]{ieeeconf}  

\IEEEoverridecommandlockouts  
\usepackage{textcomp}

\usepackage{amsmath,amssymb}

\usepackage{amsthm}
\usepackage{amsfonts,mathrsfs}
\usepackage{color}
\usepackage{graphicx}
\usepackage{epsfig}
\usepackage{subcaption}
\usepackage[utf8]{inputenc}
\usepackage{amsmath}
\let\labelindent\relax
\usepackage{enumitem}
\usepackage{accents}
\usepackage{mathtools} 
\usepackage{tabularx}
\newcolumntype{Y}{>{\centering\arraybackslash}X} 
\usepackage[prependcaption,colorinlistoftodos]{todonotes}
\usepackage{cancel}
\usepackage[normalem]{ulem} 
\usepackage{diagbox}
\usepackage{pgfplots}
\pgfplotsset{compat=1.11}
\usepackage{subcaption}
\usepgfplotslibrary{fillbetween}

\usepackage{algorithm,algorithmicx}
\usepackage{algpseudocode}
\usepackage[hyphens]{url} 
\usepackage[hidelinks]{hyperref}

\definecolor{CBcyan}{rgb}{0.4,0.8,0.9333}    
\definecolor{CBred}{rgb}{0.9333,0.4,0.4667}  

\usepackage{tikz}
\usetikzlibrary{arrows.meta} 
\usetikzlibrary{fit, positioning}
\usetikzlibrary{calc}
\usetikzlibrary{decorations.pathmorphing} 

\tikzstyle{block} = [rectangle, minimum width=.7cm, minimum height=.7cm, text centered, draw=black]
\tikzstyle{tallblock} = [rectangle, minimum width=.5cm, minimum height=1cm, text centered, draw=black]
\tikzstyle{line} = [thick,-,>=stealth]
\tikzstyle{arrow} = [thick,->,>=stealth]
\tikzstyle{roundedblock} = [rectangle, minimum width=4cm, minimum height=2cm, text centered, draw=black, rounded corners=0.2cm]

\algrenewcommand\textproc{\textsc}

\DeclareMathOperator*{\argmin}{arg\,min}

\newcommand{\mc}{\mathcal}
\newcommand{\bb}{\mathbb}
\newcommand{\R}{\bb R}

\DeclareMathAlphabet{\mathbbmsl}{U}{bbm}{m}{sl}

\newcommand{\Rmnum}[1]{\expandafter\@slowromancap\romannumeral #1@}
\newcommand{\eod}{\ensuremath{\hfill\Box}}
\newcommand{\qedd}{\ensuremath{\hfill \blacksquare}}

\renewcommand{\tilde}[1]{\accentset{\sim}{#1}}

\makeatletter
\newcommand{\specialcell}[1]{\ifmeasuring@#1\else\omit$\displaystyle#1$\ignorespaces\fi}
\makeatother
\newtheorem{proposition}{Proposition}
\newtheorem{lemma}{Lemma}

\newtheorem{theorem}{Theorem}
\newtheorem{definition}{Definition}
\newtheorem{assumption}{Assumption}

\newtheorem{remark}{Remark}

\newcommand{\AVI}{\mathrm{AVI}}

\definecolor{set19c1}{HTML}{E41A1C}
\definecolor{set19c2}{HTML}{377EB8}
\definecolor{set19c3}{HTML}{4DAF4A}
\definecolor{set19c4}{HTML}{984EA3}
\definecolor{set19c5}{HTML}{FF7F00}
\definecolor{set19c6}{HTML}{FFFF33}
\definecolor{set19c7}{HTML}{A65628}
\definecolor{set19c8}{HTML}{F781BF}
\definecolor{set19c9}{HTML}{999999}

\title{\texttt{DR-DAQP}: An Hybrid Operator Splitting and Active-Set Solver for Affine Variational Inequalities}
\author{Daniel Arnstr\"om, Emilio Benenati, Giuseppe Belgioioso
\thanks{D. Arnstr\"om is with Ericsson, E. Benenati and G. Belgioioso are with the Department of Decision and Control Systems, KTH Royal Institute of Technology, Stockholm, Sweden. Emails: \texttt{daniel.arnstrom@gmail.com}, \texttt{\{benenati, giubel\}@kth.se}.}
\thanks{This work was partially supported by the Wallenberg AI, Autonomous Systems and Software Program (WASP) funded by the Knut and Alice Wallenberg Foundation.}
}
\date{}

\begin{document}

\maketitle

\begin{abstract}
We present \texttt{DR-DAQP}, an open-source solver for strongly monotone affine variational inequaliries that combines Douglas-Rachford operator splitting with an active-set acceleration strategy.
The key idea is to estimate the active set along the iterations to attempt a Newton-type correction. This step yields the exact AVI solution when the active set is correctly estimated, thus overcoming the asymptotic convergence limitation inherent in first-order methods.
Moreover, we exploit warm-starting and pre-factorization of relevant matrices to further accelerate evaluation of the algorithm iterations.
We prove convergence and establish conditions under which the algorithm terminates in finite time with the exact solution.
Numerical experiments on randomly generated AVIs show that \texttt{DR-DAQP} is up to two orders of magnitude faster than the state-of-the-art solver \texttt{PATH}. On a game-theoretic MPC benchmark, \texttt{DR-DAQP} achieves solve times several orders of magnitude below those of the mixed-integer solver \texttt{NashOpt}.
A high-performing C implementation is available at \url{https://github.com/darnstrom/daqp}, with easily-accessible interfaces to Julia, MATLAB, and Python.
\end{abstract}

\section{Introduction}
Variational inequalities (VIs) are a general class of mathematical problems that includes both convex programs and complementarity problems \cite{facchinei2003finite} as particular cases. Moreover, the problem of finding a generalized Nash equilibrium can in some cases be cast as a VI \cite{facchinei_generalized_2010}, which makes them especially relevant in the control of multi-agent systems for applications when the objectives and constraints of the agents are coupled but not necessarily aligned, such as autonomous driving systems \cite{fridovich-keil_efficient_2020, le_cleach_lucidgames_2021}, swarm robotics \cite{do_nascimento_game_2023}, and autonomous racing \cite{fieni_game_2025}. An emerging control strategy for these systems, namely, game-theoretic model predictive control (GT-MPC), requires in fact the solution of a Nash equilibrium problem, and therefore of a VI, at every time step. In the context of GT-MPC, the availability of fast and reliable solvers of VIs becomes critical for the implementation of the controller, especially when high sampling rates are required. 
\par In this paper, we focus on the \emph{affine variational inequality} (AVI) subclass, which emerges for example in Nash equilibrium problems where the agents' cost functions are quadratic and the constraints are linear. Interestingly, more general problems that do not result in such an affine structure can be approached via a sequential affine approximation \cite{fridovich-keil_efficient_2020}, \cite{zhu_sequential_2023}. Thus, AVIs constitute the fundamental computational bottleneck of GT-MPC. \\
Despite the extensive literature on AVI solution algorithms \cite{facchinei2003finite}, the availability of readily-usable implementations focused on reliability and speed of computation is crucially limited, as we review next.

\subsection*{Related work}

Existing approaches to solving AVIs can broadly be classified into three categories. First, \emph{pivotal methods}~\cite{cao1996pivotal} generalize the simplex method to AVIs. 
The solver \texttt{PATH}~\cite{dirkse1995path} and its AVI-tailored variant \texttt{PATHAVI}~\cite{kim2018structure} are well-established representatives of this class. While pivotal methods can compute exact solutions, they may become computationally expensive for large-scale problems \cite{murty2009computational}.

Second, \emph{operator-splitting methods}~\cite{eckstein1998operator,baghbadorani2025douglas} decompose the problem into simpler subproblems that can be solved iteratively. Open-source implementations of several such algorithms are available in the \texttt{monviso} package~\cite{mignoni_monviso_2025}. However, these methods only provide asymptotic convergence to the solution, and typically require many iterations to achieve a high-precision of the solution estimate. This constitutes a crucial limitation for real-time control, where both solution accuracy and constraint satisfaction directly affect closed-loop performance~\cite{liu2024input}.

Third, in the parametric setting, the solution to the AVI is a piece-wise affine function of the parameter \cite[Thm. 4.3.2]{facchinei2003finite}, which can be computed in closed-form~\cite{benenati2025explicit} using the parametric solver \texttt{PDAQP}~\cite{arnstrom2024high}. While this explicit approach eliminates online computation, it is limited to problems of modest size due to the combinatorial growth of the polyhedral partition that defines the piece-wise affine solution map. 

\subsection*{Contributions}
In this paper, we present \texttt{DR-DAQP}, a hybrid AVI solver that bridges the gap between operator-splitting and active-set methods. Our contributions are threefold 
\begin{enumerate}
    \item We propose a hybrid algorithm (Algorithm~\ref{alg:dr-daqp}) that augments the Douglas-Rachford splitting method with an active-set identification and Newton-step acceleration strategy. This yields finite-time convergence to the \emph{exact} AVI solution, while retaining the global convergence guarantees of operator splitting (Theorems~\ref{thm:convergence}-\ref{thm:finite}).
    \item We provide an efficient and highly-optimized implementation in \texttt{C} that leverages the dual active-set QP solver \texttt{DAQP}~\cite{arnstrom2022dual} for the projection subproblems, exploiting warm-starting of the set of active constraints and shared matrix factorizations to minimize computational overhead.
    \item We demonstrate the solver's effectiveness on random AVI benchmarks, achieving up to two orders of magnitude speedup over \texttt{PATH}, and on a game-theoretic MPC case study, where \texttt{DR-DAQP} outperforms a mixed-integer approach \cite{bemporad2025nashopt} by several orders of magnitude.
\end{enumerate}

A detailed installation and user guide for the \texttt{C}, \texttt{Julia}, \texttt{MATLAB}, and \texttt{Python} interfaces can be found in the \texttt{DAQP} documentation\footnote{\url{https://darnstrom.github.io/daqp/start/advanced/avi}}.


\section{Problem Formulation}\label{sec:problem}

\subsection{Affine variational inequalities}

We consider affine variational inequalities, in which the objective is to find $x\in \mathcal{X}$ such that
\begin{equation}
    \label{eq:avi}
    \quad (Hx+f)^\top(y-x) \geq 0,\: \forall y\in \mathcal{X}, 
\end{equation}
where $H\in\R^{n\times n}$, $f\in\R^n$, and $\mc X$ is a polyhedral feasible set
\begin{equation}
    \mc X:=\{x\in\R^n \mid Ax\leq b\},
\end{equation}
with $A\in\R^{m\times n}$ and $b\in\R^m$. We make the following standing assumption throughout the paper:
\begin{assumption}\label{ass:pd}
    $H + H^\top$ is positive definite.
\end{assumption}
Assumption~\ref{ass:pd} is equivalent to the the operator $x\to Hx +f$ being strongly monotone \cite[Prop. 2.3.3]{facchinei2003finite}, and it implies that the AVI in~\eqref{eq:avi} has a unique solution~\cite[Thm.~2.3.3]{facchinei2003finite}. Equivalently, $x$ solves~\eqref{eq:avi} if and only if there exists $\lambda\in\R^m$ satisfying the Karush-Kuhn-Tucker (KKT) conditions~\cite[Thm.~1.2.1]{facchinei2003finite}: 
\begin{subequations}
    \label{eq:kkt}
  \begin{align}
    H x + f + A^T \lambda &= 0, \label{eq:kkt:stat} \\
    0 \leq \lambda \perp b-Ax &\geq 0. \label{eq:kkt:compl}
  \end{align}
\end{subequations}
Despite the structural similarity to standard QP optimality conditions, the KKT system in~\eqref{eq:kkt} differs in a crucial way: $H$ is in general \emph{not} symmetric and, thus, standard QP solvers cannot be directly employed for its solution.

\subsection{Active-set structure}\label{sec:activeset}

The algorithmic approach that we develop exploits the \emph{active-set structure} of the AVI solution. The optimal active set is defined as
\begin{equation}\label{eq:active_set}
    \mc A:=\{ i \in \{1,\ldots,m\} \mid A_i x^* = b_i \},
\end{equation}
where $x^*$ denotes the solution to~\eqref{eq:avi}, while $A_i$ and $b_i$ denote the $i$-th row of $A$ and element of $b$, respectively. We denote as $\overline{\mc A}$ the complement of $\mc A$, that is, $\overline{\mc A}:=\{1,\dots,m\} \setminus \mc A$.
The complementarity condition~\eqref{eq:kkt:compl} implies that a dual solution $\lambda^*$ satisfies $\lambda^*_i = 0$ for all $i \in \overline{\mc A}$, so the stationarity condition~\eqref{eq:kkt:stat} reduces to 
\begin{equation}\label{eq:reduced_stat_cond}
    Hx^* + f + A_{\mc A}^\top \lambda^*_{\mc A} = 0,
\end{equation}
where $A_{\mc A}$ and $\lambda^*_{\mc A}$ denote the submatrix and subvector with row/element indices in $\mc A$. Combining~\eqref{eq:reduced_stat_cond} with the active-set definition~\eqref{eq:active_set}, we observe that $(x^*, \lambda^*_{\mc A})$ can be computed from the linear system
\begin{equation}
    \label{eq:kkt-system}
  \begin{bmatrix}
      H & A_{\mathcal{A}}^\top \\
      A_{\mathcal{A}} & 0
  \end{bmatrix}
  \begin{bmatrix}
   x \\ \lambda_{\mc A}  
  \end{bmatrix} = 
  \begin{bmatrix}
      -f \\ b_{\mathcal{A}}
  \end{bmatrix}.
\end{equation}
We conclude that, \emph{if the active set $\mc A$ is known}, the AVI reduces to solving the linear system~\eqref{eq:kkt-system} and verifying primal feasibility ($Ax^* \leq b$) and dual feasibility ($\lambda^*_{\mc A}\geq 0$). Of course, $\mc A$ is not known a priori. Our strategy, detailed in Section~\ref{sec:algorithm}, is to iteratively estimate $\mc A$ and $x^*$ using the Douglas-Rachford splitting method, and to accelerate convergence via a Newton-type step by leveraging the active-set estimation.

\section{The \texttt{DR-DAQP} Algorithm}\label{sec:algorithm}

\subsection{Douglas-Rachford splitting for AVIs}

We construct our solver based on the the Douglas-Rachford (DR) splitting method for solving the AVI in~\eqref{eq:avi} developed in \cite{eckstein1998operator}. Let us define $\tilde{H}$ as
\begin{equation}\label{eq:splitting}
    H_s \triangleq \tfrac{1}{2}(H+H^\top), \quad \tilde{H} \triangleq \rho I + H_s,
\end{equation}
where $\rho > 0$ is an user-defined regularization parameter. By introducing the parameter-dependent linear term
\begin{equation}\label{eq:ftilde}
    \tilde{f}(z) \triangleq f + \left(H-\tilde{H}\right) z,
\end{equation}
the DR algorithm can be expressed as in Algorithm~\ref{alg:dr}. We reference to \cite{eckstein1998operator} for details on the iteration, but we remark that the DR algorithm specifically leverages the affine structure of the problem by decomposing the matrix $H$ into two components $M_1 := \frac{H_s}{2}$ and $M_2 := H-\frac{1}{2}H_s$, where $M_1,M_2$ follow the notation of the referenced paper. 
\par At each iteration, Step~\ref{step:qp-solve} of Algorithm~\ref{alg:dr} requires the solution of the standard, convex QP parametric in $z$:
\begin{equation}
    \label{eq:dr-qp}
  \begin{aligned}
      \underset{x}{\text{minimize}}\:\: &\tfrac{1}{2}  x^\top \tilde{H} x + \tilde{f}(z)^\top x, \\
      \text{subject to}\:\: & A x \leq b.
  \end{aligned}
\end{equation}
Note that, across iterations, the only term that varies is the linear part of the objective function, namely, $\tilde{f}(z)$. This structural property enables to pre-factor $\tilde{H}$, leading to an efficient implementation, as we detail later in Remark \ref{rem:hot}.

\begin{algorithm}[t]
    \caption{Douglas-Rachford splitting (\texttt{DR}) for~\eqref{eq:avi}}
    \label{alg:dr}
    \begin{algorithmic}[1]
        \Require $\AVI(H,f,A,b)$, tolerance $\eta > 0$, parameter $\rho > 0$.   
        \Ensure Approximate solution $y$ 
        \State $H_s \leftarrow \textbf{sym}(H)$;\quad $\tilde{H} \leftarrow \rho I + H_s$ 
        \Repeat 
        \State $y_k \leftarrow  \argmin_{Ax\leq b}\, \tfrac{1}{2}x^\top\tilde{H}x + \tilde{f}(z_k)^\top x$ \label{step:qp-solve}
        \If {$\|y_k-z_k\| \leq \eta$} \textbf{return} $y_k$ 
        \EndIf
        \State $z_{k+1} \leftarrow (\rho I + H)^{-1}\left(\rho y_k+ H z_k + \tfrac{H_s}{2}(y_k-z_k)\right)$
        \State $k\leftarrow k+1$
        \Until{convergence}
  \end{algorithmic}
\end{algorithm}

\subsection{Active-set acceleration: the \texttt{DR-DAQP} algorithm}

While Algorithm~\ref{alg:dr} converges to the AVI solution, convergence is only asymptotic. We now describe the key modifications that lead to Algorithm~\ref{alg:dr-daqp}:
\begin{enumerate}[leftmargin=*]
    \item \textbf{Active-set identification.} By using an active-set QP solver for Step \ref{step:qp-solve}, we have access to the set of active constraints $\mc A_k$ for the primal solution estimate $y_k$ with no additional computation required. When the active set stabilizes ($\mc A_k = \mc A_{k-1}$), we use it as a candidate for the true active set $\mc A$.
    \item \textbf{Exact solution evaluation.} Given a candidate active set $\mc A_k$, we solve the linear system~\eqref{eq:kkt-system} to obtain a candidate solution $(\tilde{z}_k,\lambda_k)$. If $\tilde{z}_k$ is primal feasible and $\lambda_k \geq 0$, the KKT conditions~\eqref{eq:kkt} are satisfied, and the algorithm immediately returns the \emph{exact} solution.
    \item \textbf{Newton step acceleration.} If the solution to the linear system~\eqref{eq:kkt-system} is either not primal or dual feasible, we attempt a Newton-type step to accelerate convergence. This step is detailed in Section~\ref{sec:merit}.
    \item \textbf{Warm-started QP solution via \texttt{DAQP}.} The QP subproblems~\eqref{eq:dr-qp} are solved using the dual active-set solver \texttt{DAQP}~\cite{arnstrom2022dual}. Since $\tilde{H}$ and $A$ are fixed across all iterations, \texttt{DAQP} can reuse matrix factorizations and warm-start from the previous active set, yielding substantial computational savings (see Remark~\ref{rem:hot}).
\end{enumerate}

\begin{algorithm}[t]
    \caption{\texttt{DR-DAQP}: Hybrid operator-splitting/active-set method for~\eqref{eq:avi}}
    \label{alg:dr-daqp}
    \begin{algorithmic}[1]
        \Require $\AVI(H,f,A,b)$, parameter $\rho > 0$
        \Ensure Primal-dual solution $(z,\lambda)$
        \State $H_s \leftarrow \text{\bf{sym}}(H)$;\quad $\tilde{H} \leftarrow \rho I + H_s$;\quad $\delta \leftarrow \infty$
        \Repeat
        \State $y_k,\mathcal{A}_k \leftarrow$  solve QP~\eqref{eq:dr-qp} for $z = z_{k}$ using \texttt{DAQP} \label{step2:qp-solve}
        \If{$\mathcal{A}_k = \mathcal{A}_{k-1}$} \label{step:as-test} \Comment{Active set stabilized}
        \State $\tilde{z}_k,\lambda_k \leftarrow$ solve KKT system~\eqref{eq:kkt-system} for $\mathcal{A} = \mathcal{A}_k$ \label{step:kkt_system} 
        \If{$\lambda_k \geq 0$ \textbf{and} $ A \tilde{z}_k \leq b$} \label{step:as-terminate} 
        \State \textbf{return} $\tilde{z}_k, \lambda_k$ \Comment{Exact solution found}
        \EndIf
        \State $\tilde{y}_k,{\mathcal{A}_k} \leftarrow$  solve QP~\eqref{eq:dr-qp} for $z = \tilde{z}_{k}$ using \texttt{DAQP} \label{step:QP}
        \If{$\|\tilde{y}_k-\tilde{z}_k\| < \delta $} \Comment{Newton step accepted} \label{step:dr_condition}
        \State $y_k \leftarrow \tilde{y}_k$;\quad $z_k \leftarrow \tilde{z}_k$;\quad $\delta \leftarrow \|\tilde{y}_k - \tilde{z}_k \|$ \label{step:heur-update}
        \EndIf
        \EndIf
        \State $z_{k+1} \leftarrow (\rho I + H)^{-1}\left(\rho y_k+ H z_k + \tfrac{H_s}{2}(y_k-z_k)\right)$  \label{step2:DR_update}
        \State $k\leftarrow k+1$
        \Until{convergence}
  \end{algorithmic}
\end{algorithm}

\subsection{Merit function and Newton-step acceptance}\label{sec:merit}

When the candidate solution obtained from the KKT system~\eqref{eq:kkt-system} (evaluated in Step \ref{step:kkt_system} of Algorithm \ref{alg:dr-daqp}) does not satisfy primal or dual feasibility (as is the case when the estimate $\mc A_k$ is different from $\mc A$), the algorithm evaluates whether a Newton-type step can still improve the current iterate. To this end, we use the \emph{natural residual} as a merit function.

\begin{definition}[Natural residual]\label{def:natres}
    For a symmetric positive definite matrix $Q$, the $Q$-weighted projection of $v$ onto $\mathcal{X}$ is
    \begin{equation}
        \label{eq:proj}
        \Pi_{\mathcal{X},Q}(v) \triangleq \argmin_{x\in \mathcal{X}}\|x - v\|_Q.
    \end{equation}
    The $Q$-weighted natural residual of the AVI~\eqref{eq:avi} of $z$ is
    \begin{equation}\label{eq:natres}
    R_{\mathcal{X},Q}(z) \triangleq z - \Pi_{\mathcal{X},Q}\left(z- Q^{-1}(Hz+f)\right).
    \end{equation}
\end{definition}

\begin{proposition}[{\cite[Prop.~1.5.8]{facchinei2003finite}}]\label{prop:natres}
    For any $Q \succ 0$, a point $z^*$ solves the AVI in~\eqref{eq:avi} if and only if $R_{\mathcal{X},Q}(z^*) = 0$. 
\end{proposition}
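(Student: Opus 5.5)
The plan is to characterize the $Q$-weighted projection by its variational (first-order optimality) condition and then use the invertibility of $Q$ to strip it out. Since $Q \succ 0$, the problem in~\eqref{eq:proj} is the minimization of the strictly convex quadratic $\tfrac12\|x-v\|_Q^2$ over the nonempty closed convex polyhedron $\mathcal{X}$. It therefore has a unique minimizer, so $\Pi_{\mathcal{X},Q}$ is well defined. Moreover, $p=\Pi_{\mathcal{X},Q}(v)$ if and only if
\begin{equation*}
p\in\mathcal{X},\qquad (p-v)^\top Q\,(y-p)\ \ge\ 0\quad \forall y\in\mathcal{X}.
\end{equation*}
This is the standard necessary and sufficient optimality condition for minimizing a differentiable convex function over a convex set, the gradient being $Q(p-v)$.

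Now take $v = z - Q^{-1}(Hz+f)$ and examine $R_{\mathcal{X},Q}(z)=0$. By~\eqref{eq:natres}, this holds exactly when $z = \Pi_{\mathcal{X},Q}(v)$. By the characterization, that is equivalent to $z\in\mathcal{X}$ together with $(z-v)^\top Q(y-z)\ge 0$ for all $y\in\mathcal{X}$. Substituting $z - v = Q^{-1}(Hz+f)$ gives $(z-v)^\top Q = (Hz+f)^\top Q^{-\top}Q$. Because $Q$ is symmetric, $Q^{-\top}Q = I$, so the condition becomes
\begin{equation*}
z\in\mathcal{X},\qquad (Hz+f)^\top (y-z)\ge 0\quad\forall y\in\mathcal{X}.
\end{equation*}
This is precisely~\eqref{eq:avi}. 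Each step is an equivalence, which gives both directions of the statement at once.

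The only point that needs care is the projection characterization itself, and in particular its sufficiency direction. That direction follows from convexity: for any $y\in\mathcal{X}$,
\begin{equation*}
\|y-v\|_Q^2 = \|p-v\|_Q^2 + 2(p-v)^\top Q(y-p) + \|y-p\|_Q^2 \;\ge\; \|p-v\|_Q^2 .
\end{equation*}
The necessity direction follows by differentiating along the segment $p+t(y-p)$ at $t=0^+$. Nothing in the argument uses Assumption~\ref{ass:pd}; that assumption only guarantees that a point with zero residual exists and is unique. This matches \cite[Prop.~1.5.8]{facchinei2003finite}, which can alternatively be cited directly.
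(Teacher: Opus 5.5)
Your proof is correct and is essentially the standard argument. The paper gives no proof of its own and simply cites \cite[Prop.~1.5.8]{facchinei2003finite}, where the result likewise follows from the variational characterization of the $Q$-weighted projection onto $\mathcal{X}$, then the substitution $z-v=Q^{-1}(Hz+f)$ with symmetry of $Q$ cancelling $Q^{-\top}Q$; your remarks that Assumption~\ref{ass:pd} plays no role and that $\mathcal{X}\neq\emptyset$ is implicitly needed are also accurate.
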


The natural residual provides a measure of the quality of the solution estimate \cite[\S 6]{facchinei2003finite}. We leverage this fact when evaluating the Newton step, which is accepted (Step~\ref{step:heur-update}) only if it strictly reduces the residual compared to all previously accepted Newton steps, thus ensuring monotone improvement of the estimate. Moreover, we leverage a connection with the QP in \eqref{eq:dr-qp}: Concretely, let $y^*$ be the solution of \eqref{eq:dr-qp} for a given parameter $z$. Then, one can show that $\|R_{\mathcal{X},Q}(z)\| = \|z-{y}^*\| $. Consequently, the residual can be computed with minimal overhead.
\begin{remark}[Selection of $\rho$]\label{rem:rho}
The regularization parameter $\rho$ plays the role of a step-size parameter (its reciprocal corresponds to the step length in ADMM-type methods). While adaptive schemes exist, we use the fixed choice $\rho = \|H\|_F$ (Frobenius norm), which provides a good balance between convergence speed and numerical conditioning in our experiments.
\end{remark}
\begin{remark}[Warm-starting the QP subproblems]
    \label{rem:hot}
    The constraint matrix $A$ and Hessian $\tilde{H}$ of the QP~\eqref{eq:dr-qp} are identical across all iterations of Algorithm~\ref{alg:dr-daqp}. This enables \texttt{DAQP}~\cite{arnstrom2022dual} to cache and reuse its internal $LDL^\top$ factorizations, and to warm-start each QP solve from the active set of the previous iteration. As demonstrated in Figure~\ref{fig:clara}, this warm-starting strategy yields significant computational savings.
\end{remark}
\begin{remark}[Stabilization of the active set]\label{rem:stab}
In Algorithm~\ref{alg:dr-daqp}, a Newton step is attempted whenever $\mathcal{A}_k = \mathcal{A}_{k-1}$.  In practice, requiring the active set to remain stable for several consecutive iterations (e.g., 5) before attempting the Newton step leads to higher-quality corrections and reduces the overhead from failed Newton attempts.
\end{remark}

\section{Convergence Analysis}\label{sec:convergence}

We now establish the convergence properties of Algorithm~\ref{alg:dr-daqp}. First, we show that the algorithm inherits the global convergence of the underlying DR method, and thus achieves asymptotic convergence under Assumption \ref{ass:pd}. Proofs to all statements are provided in the Appendix.
\begin{theorem}[Convergence]\label{thm:convergence}
    Under Assumption~\ref{ass:pd}, the sequence $\{z_k\}$ generated by Algorithm~\ref{alg:dr-daqp} converges to the solution of the AVI in~\eqref{eq:avi}. \eod
\end{theorem}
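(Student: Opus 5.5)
The proof views Algorithm~\ref{alg:dr-daqp} as the DR iteration of Algorithm~\ref{alg:dr}, interrupted by a finite number of ``restarts'' at accepted Newton points.

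\textbf{Step 1: the DR map.} Write the DR update as a fixed-point map $z_{k+1}=T(z_k)$, where
\[
T(z)=(\rho I+H)^{-1}\bigl(\rho y(z)+Hz+\tfrac{H_s}{2}(y(z)-z)\bigr)
\]
and $y(z)$ denotes the unique solution of QP~\eqref{eq:dr-qp}. By \cite{eckstein1998operator}, under Assumption~\ref{ass:pd} and for any $\rho>0$, the iteration $z_{k+1}=T(z_k)$ converges from every initial point to a fixed point $\bar z$ of $T$. At a fixed point, $y(\bar z)=\bar z$. The optimality condition of~\eqref{eq:dr-qp} then shows that $\bar z$ solves~\eqref{eq:avi}, equivalently that the residual $\|z-y(z)\|$ vanishes, consistent with Proposition~\ref{prop:natres}. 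By uniqueness, $\bar z=x^*$. Since $y(\cdot)$ is a projection-type map it is continuous, so $\|z_k-y(z_k)\|\to 0$ along the plain DR sequence.

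\textbf{Step 2: finitely many replacements.} Algorithm~\ref{alg:dr-daqp} departs from $T$ only when a Newton step is accepted. In that case $(y_k,z_k)$ is replaced by $(y(\tilde z_k),\tilde z_k)$, and $\delta$ strictly decreases to $\|\tilde y_k-\tilde z_k\|$. The key observation is that each candidate $\tilde z_k$ solves~\eqref{eq:kkt-system} for some subset $\mc A_k\subseteq\{1,\dots,m\}$. There are at most $2^m$ such subsets. For each subset for which the KKT matrix is nonsingular, the candidate is unique, so at most $2^m$ distinct candidate points (and hence residual values) can ever arise. Because accepted values of $\delta$ strictly decrease, no candidate can be accepted twice. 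Hence only finitely many Newton steps are accepted. If the KKT system is singular for some subset, I would assume the implementation returns a fixed (e.g.\ least-norm) solution, which preserves the finiteness count.

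\textbf{Step 3: conclusion.} After the last accepted Newton step, at some index $K$, the algorithm performs $z_{k+1}=T(z_k)$ for all $k\ge K$. This is exactly Algorithm~\ref{alg:dr} started at $z_K$, so Step 1 gives $z_k\to x^*$. If instead the algorithm terminates at Step~\ref{step:as-terminate}, the returned point satisfies~\eqref{eq:kkt}, hence equals $x^*$.

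\textbf{Main obstacle.} The delicate point is Step 1: confirming that the specific update matches the splitting $M_1=\tfrac{H_s}{2}$, $M_2=H-\tfrac{H_s}{2}$ of \cite{eckstein1998operator}, and that its convergence hypotheses hold for every $\rho>0$. Those hypotheses are monotonicity of the constraint normal cone plus $M_1$, and suitable (strong) monotonicity of $M_2$. Here $M_2$ has symmetric part $\tfrac{H_s}{2}\succ 0$, and $M_1\succ 0$, so I expect both hypotheses to follow from Assumption~\ref{ass:pd}. I would verify this by rewriting Step~\ref{step2:DR_update} in the resolvent form of that reference.
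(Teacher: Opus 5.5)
Your proposal is correct and follows essentially the same route as the paper: each Newton candidate $\tilde z_k$, and hence its merit value $\|\tilde y_k-\tilde z_k\|$, is determined by one of finitely many active sets while $\delta$ strictly decreases, so only finitely many Newton steps are accepted, after which the iteration is plain DR and converges by \cite{eckstein1998operator}. Your hedge about singular KKT matrices is unnecessary. The system is consistent because $\mathcal{A}_k$ comes from a feasible point, and any two solutions differ by some $(d_x,d_\lambda)$ with $A_{\mathcal{A}_k}d_x=0$ and $Hd_x+A_{\mathcal{A}_k}^\top d_\lambda=0$, which gives $d_x^\top H d_x=0$ and hence $d_x=0$ by Assumption~\ref{ass:pd}, so the primal candidate is always unique.
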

We prove a distinguishing feature of Algorithm~\ref{alg:dr-daqp} compared to purely first-order methods, namely, that it terminates in a finite number of iterations with the \emph{exact} solution.
\begin{assumption}[LICQ]\label{ass:licq}
    The rows of $A_{\mathcal{A}}$ are linearly independent, where $\mc A$ is defined in \eqref{eq:active_set}. \eod
\end{assumption}
LICQ implies uniqueness of the dual solution to \eqref{eq:kkt} \cite[Prop. 3.2.1]{facchinei2003finite}, which we denote $\lambda^*$.
\begin{assumption}[Strict complementarity]\label{ass:scs}
    \[
    \lambda^*_i > 0, ~ \forall i \in \mathcal{A}. \tag*{$\square$}
    \]
\end{assumption}
\begin{theorem}[Finite-time termination]\label{thm:finite}
    Let Assumptions~\ref{ass:pd},~\ref{ass:licq} and~\ref{ass:scs} hold true. Then, Algorithm~\ref{alg:dr-daqp} terminates with the exact solution $(x^*, \lambda^*)$ in a finite number of iterations.
\end{theorem}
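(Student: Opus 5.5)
The plan is to combine the asymptotic convergence of Theorem~\ref{thm:convergence} with a finite active-set identification argument for the QP subproblem~\eqref{eq:dr-qp}. We argue by contradiction: suppose Algorithm~\ref{alg:dr-daqp} never terminates. Then by Theorem~\ref{thm:convergence} we have $z_k \to x^*$. Since the DR fixed point satisfies $y^*=z^*=x^*$ and $\|y_k - z_k\|$ equals the natural residual, which is continuous and vanishes only at $x^*$ by Proposition~\ref{prop:natres}, we also get $y_k \to x^*$. Newton steps only replace $(y_k,z_k)$ by a pair with strictly smaller residual. So I would first check that accepted Newton steps do not destroy convergence. This should already be handled inside the proof of Theorem~\ref{thm:convergence}, which we take as given.

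The key step is to show that the active set $\mc A_k$ returned by \texttt{DAQP} for the QP~\eqref{eq:dr-qp} at parameter $z_k$ equals $\mc A$ for all $k$ large enough. Write the KKT conditions of~\eqref{eq:dr-qp} at $z$:
\[
\tilde H y + \tilde f(z) + A^\top \mu = 0, \qquad 0\le \mu \perp b - Ay \ge 0 .
\]
At $z=x^*$ we have $\tilde H x^* + \tilde f(x^*) = Hx^*+f$, so $(y,\mu)=(x^*,\lambda^*)$ solves it. Under LICQ (Assumption~\ref{ass:licq}) the multiplier is unique, and under strict complementarity (Assumption~\ref{ass:scs}) the active set is nondegenerate.

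Standard sensitivity for strongly convex parametric QPs then gives the following. The map $z\mapsto (y(z),\mu(z))$ is continuous near $x^*$, and for $z$ close to $x^*$ the optimal active set is exactly $\mc A$: inactive constraints stay strictly inactive by continuity of $y$, and active multipliers stay strictly positive by continuity of $\mu$. The continuity of $\mu$ uses LICQ, since $A_{\mc A}$ has full row rank and $\mu_{\mc A}$ is recovered by least squares from stationarity. Because LICQ and strict complementarity make the solution unique and nondegenerate, the active set reported by a dual active-set solver like \texttt{DAQP} coincides with this set. Hence there is $K$ with $\mc A_k=\mc A$ for all $k\ge K$, and in particular $\mc A_{K+1}=\mc A_K$.

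At that iteration Step~\ref{step:kkt_system} solves~\eqref{eq:kkt-system} with the true active set $\mc A$. I then need the KKT matrix to be nonsingular. Take $(x,\lambda)$ in its kernel. Then $A_{\mc A}x=0$, and $x^\top H x = -x^\top A_{\mc A}^\top\lambda = 0$, so $x=0$ by Assumption~\ref{ass:pd}. After that, $A_{\mc A}^\top\lambda=0$ gives $\lambda=0$ by LICQ. Thus the unique solution is $(x^*,\lambda^*_{\mc A})$, which satisfies $\lambda^*\ge0$ and $Ax^*\le b$. The test in Step~\ref{step:as-terminate} passes and the algorithm returns the exact solution, contradicting non-termination.

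The main obstacle is the identification step, that is, making rigorous that the reported $\mc A_k$ is exactly $\mc A$ near $x^*$. This covers both the convergence of $y_k$ and the multiplier continuity, and it must not be undermined by the index reassignment $\mc A_k$ in Step~\ref{step:QP}. To handle Step~\ref{step:QP}, I would note that it solves the QP at $\tilde z_k$. For large $k$ the iterates are near $x^*$ anyway, and any $\tilde z_k$ produced there also converges, so the same neighborhood argument applies to it.
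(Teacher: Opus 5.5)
Your proposal is correct and follows the paper's route. You transfer LICQ and strict complementarity to the QP~\eqref{eq:dr-qp} at $z=x^*$, show $\mc A_{\text{QP}}(z)=\mc A$ near $x^*$ (via continuity of the primal--dual QP solution map, where the paper invokes the affine critical-region result of \cite{bemporad_explicit_2002}), and combine this with $z_k\to x^*$ from Theorem~\ref{thm:convergence} so that Step~\ref{step:kkt_system} returns $(x^*,\lambda^*)$. Your explicit nonsingularity check of~\eqref{eq:kkt-system} and your handling of the test $\mc A_k=\mc A_{k-1}$ are useful details the paper glosses over. Your claim that $\tilde z_k$ also converges is never argued, but you do not need it: Step~\ref{step:QP} runs only after the test in Step~\ref{step:as-terminate} fails, which cannot happen once Step~\ref{step2:qp-solve} returns $\mc A$, so no reassignment occurs for $k\ge K$ and the algorithm stops at iteration $K$ or $K+1$.
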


\section{Numerical Experiments}\label{sec:experiments}

We evaluate \texttt{DR-DAQP} on two classes of problems: randomly generated AVIs of varying size and asymmetry, and a game-theoretic MPC benchmark. All experiments\footnote{Code for reproducing all of the conducted experiments is available at \url{http://github.com/darnstrom/dr-daqp-cdc26}.} are performed on an AMD Ryzen 7 8840HS processor.

\subsection{Randomly generated AVIs}

We generate AVIs of the form~\eqref{eq:avi} with  
\begin{equation}\label{eq:Hgen}
    H =  (1-\gamma^{\text{asym}}) \frac{M_s^\top M_s}{\|M_s^\top M_s\|} + \gamma^{\text{asym}} \frac{M_a^\top-M_a}{\|M_a^\top-M_a\|},
\end{equation}
where the elements of $A$, $M_s$, and $M_a$ are drawn independently from $\mathcal{N}(0,1)$, and $b$ is chosen to ensure feasibility. The parameter $\gamma^{\text{asym}}\in[0,1]$ controls the degree of asymmetry: $\gamma^{\text{asym}}=0$ yields a symmetric $H$ (i.e., a standard QP), while $\gamma^{\text{asym}}=1$ gives a purely skew-symmetric $H$ plus identity.

\subsubsection{Effect of the QP solver}
We first investigate the impact of the QP solver used in Step~\ref{step2:qp-solve} of Algorithm~\ref{alg:dr-daqp}. We vary the number of decision variables $n\in\{5,10,\ldots,50\}$ and generate 100~AVIs per problem size with $\gamma^{\text{asym}}=0.5$ and $m=10n$ constraints. Figure~\ref{fig:clara} compares the total solve time when using \texttt{DAQP} (warm-started and cold-started) versus \texttt{Clarabel}~\cite{goulart2024clarabel} as the QP subroutine. The warm-started \texttt{DAQP} variant achieves up to two orders of magnitude speedup over \texttt{Clarabel}, validating the importance of exploiting the fixed QP structure (Remark~\ref{rem:hot}).

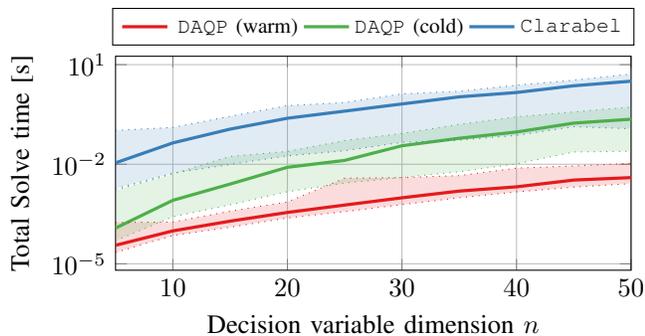
\begin{figure}[H]
    \centering
    \begin{tikzpicture}[scale=1]
        \pgfplotstableread{data/daqp_vs_daqpcold_clara.dat}{\claradata}
        \usepgfplotslibrary{fillbetween} 
        \begin{axis}[
            xmin=5,xmax=50,
            ymode=log,
            xlabel={Decision variable dimension $n$},
            ylabel={Total Solve time [s]},
            legend style={at ={(0.5,1.05)},anchor=south}, 
            legend style={nodes={scale=0.8, transform shape}},
            legend cell align={left},legend columns=3,
            ymajorgrids,yminorgrids,xmajorgrids,
            y post scale=0.5,
            ]
            \addplot [set19c1,very thick] table [x={n}, y={tmeddaqp}] {\claradata}; 
            \addplot [set19c3,very thick] table [x={n}, y={tmeddaqpcold}] {\claradata}; 
            \addplot [set19c2,very thick] table [x={n}, y={tmedclara}] {\claradata}; 

            \addplot [set19c1,dotted, name path=daqpmax] table [x={n}, y={tmaxdaqp}] {\claradata}; 
            \addplot [set19c1,dotted, name path=daqpmin] table [x={n}, y={tmindaqp}] {\claradata}; 
            \addplot [set19c1,fill opacity=0.15] fill between [of=daqpmax and daqpmin];

            \addplot [set19c2,dotted, name path=claramax] table [x={n}, y={tmaxclara}] {\claradata}; 
            \addplot [set19c2,dotted, name path=claramin] table [x={n}, y={tminclara}] {\claradata}; 
            \addplot [set19c2,fill opacity=0.15] fill between [of=claramax and claramin];

            \addplot [set19c3,dotted, name path=daqpcoldmax] table [x={n}, y={tmaxdaqpcold}] {\claradata}; 
            \addplot [set19c3,dotted, name path=daqpcoldmin] table [x={n}, y={tmindaqpcold}] {\claradata}; 
            \addplot [set19c3,fill opacity=0.15] fill between [of=daqpcoldmax and daqpcoldmin];

            \legend{\:\texttt{DAQP} (warm),\texttt{DAQP} (cold),\texttt{Clarabel}}
        \end{axis}
    \end{tikzpicture}
    \caption{Total solve time when solving the subproblems \eqref{eq:dr-qp} in Step~\ref{step:qp-solve} using \texttt{DAQP} (either warm or cold started) compared to \texttt{Clarabel}.}
    \label{fig:clara}
\end{figure}

\subsubsection{Comparison with \texttt{PATH}}
Next, we compare Algorithm~\ref{alg:dr-daqp} against the state-of-the-art pivotal solver \texttt{PATH}~\cite{dirkse1995path}, and against the standard QP solver Clarabel applied to the reformulation of the AVI as a QP in a lifted dual space \cite{aghassi2006solving}. We generate 100 AVIs per problem size with $n\in\{25,50,75,100\}$, $\gamma^{\text{asym}} = 0.5$, and $m=10n$. Figure~\ref{fig:path} shows that \texttt{DR-DAQP} is approximately one order of magnitude faster in median solve time and up to two orders of magnitude faster in worst-case solve time compared to \texttt{PATH}. The lifted QP approach \cite{aghassi2006solving}, despite allowing the use of a standard QP solver, incurs significant overhead due to the increased dimension of the lifted problem, as well as worse problem conditioning (as the resulting QP is merely convex even under Assumption \ref{ass:pd}).

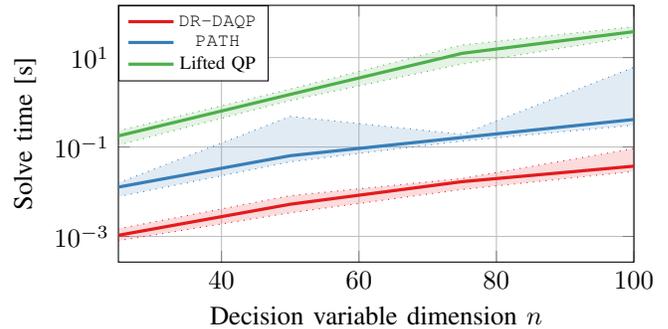
\begin{figure}[t]
  \centering
      \begin{tikzpicture}[scale=1]
        \pgfplotstableread{data/path_vs_daqp_vs_qp.dat}{\pathdata}
        \begin{axis}[
            title = {},
            xmin=25,xmax=100,
            ymode=log,
            xlabel={Decision variable dimension $n$},
            ylabel={Solve time [s]},
            legend style={at ={(0.0,1.0)},anchor=north west}, 
            legend style={nodes={scale=0.7, transform shape}},
            ymajorgrids,yminorgrids,xmajorgrids,
            y post scale=0.6,
            ]
            \addplot [set19c1,very thick] table [x={n}, y={tmeddaqp}] {\pathdata}; 
            \addplot [set19c2,very thick] table [x={n}, y={tmedpath}] {\pathdata}; 
            \addplot [set19c3,very thick] table [x={n}, y={tmedqp}] {\pathdata}; 

            \addplot [set19c1,dotted, name path=A] table [x={n}, y={tmaxdaqp}] {\pathdata}; 
            \addplot [set19c1,dotted, name path=B] table [x={n}, y={tmindaqp}] {\pathdata}; 
            \addplot [set19c1,fill opacity=0.15] fill between [of=A and B];

            \addplot [set19c2,dotted, name path=Ac] table [x={n}, y={tmaxpath}] {\pathdata}; 
            \addplot [set19c2,dotted, name path=Bc] table [x={n}, y={tminpath}] {\pathdata}; 
            \addplot [set19c2,fill opacity=0.15] fill between [of=Ac and Bc];

            \addplot [set19c3,dotted, name path=Aq] table [x={n}, y={tmaxqp}] {\pathdata}; 
            \addplot [set19c3,dotted, name path=Bq] table [x={n}, y={tminqp}] {\pathdata}; 
            \addplot [set19c3,fill opacity=0.15] fill between [of=Aq and Bq];

            \legend{\texttt{DR-DAQP},\texttt{PATH}, Lifted QP}
        \end{axis}
    \end{tikzpicture}
  \caption{Solve time comparison of \texttt{DR-DAQP}, \texttt{PATH}, and a lifted QP formulation on random AVIs with $\gamma^{\text{asym}}=0.5$ and $m=10n$. Solid lines show the median; shaded regions span the min-max range over 100 instances.}
  \label{fig:path}
\end{figure}

\subsubsection{Convergence behavior}
Finally, we study the convergence rate by comparing the residual $\|x_k - x^*\|$ across iterations for different methods. Figure~\ref{fig:iter} compares: (i)~standard DR iterations, (ii)~\texttt{DR-DAQP} without Newton steps, (iii)~\texttt{DR-DAQP} with Newton steps, and (iv)~the projected gradient method (from \texttt{monviso}~\cite{mignoni_monviso_2025}). The Newton-step acceleration enables \texttt{DR-DAQP} to reach the exact solution in a finite number of iterations, whereas the other methods converge only asymptotically. Even without Newton steps, the active-set warm-starting in \texttt{DR-DAQP} provides faster residual reduction than the baseline DR method. In accordance to Theorem~\ref{thm:finite} we see that the proposed algorithm, in contrast to DR, terminates after a finite number of iterations.

\begin{figure}[t]
  \centering
      \begin{tikzpicture}[scale=1]
        \pgfplotstableread{data/iter_comparison.dat}{\iterdata}
        \begin{axis}[
            title = {},
            xmin=1,xmax=100,
            ymin = 0.0001,
            ymode=log,
            xlabel={$k$ (\# of iterations)},
            ylabel={$\|x_k - x^*\|$},
            legend style={at ={(1.0,1.0)},anchor=north east}, 
            legend style={nodes={scale=0.7, transform shape}},
            legend cell align={left},
            legend columns=2,
            ymajorgrids,yminorgrids,xmajorgrids,
            y post scale=0.6,
            ]
            \addplot [set19c4,very thick] table [x={iter}, y={dr}] {\iterdata}; 
            \addplot [set19c3,very thick] table [x={iter}, y={daqpnonewt}] {\iterdata}; 
            \addplot [set19c1,very thick] table [x={iter}, y={daqp}] {\iterdata}; 
            \addplot [set19c2,very thick] table [x={iter}, y={pg}] {\iterdata};

            \legend{\texttt{DR}, \texttt{DR-DAQP} (no Newton), \texttt{DR-DAQP}, Projected gradient}
        \end{axis}
    \end{tikzpicture}
  \caption{Reduction of the residual $\|x_k - x^*\|$ over iterations for a representative AVI instance. \texttt{DR-DAQP} with Newton steps achieves finite termination, while first-order methods converge only asymptotically.}
  \label{fig:iter}
\end{figure}
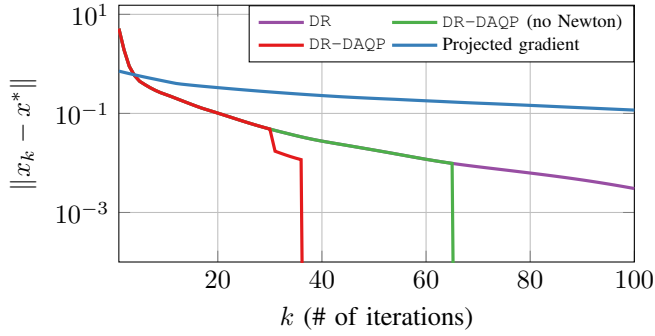

\subsection{Game-theoretic MPC}\label{sec:gtmpc-exp}

A key motivation for studying AVIs arises in game-theoretic MPC, which is a common control architecture for non-cooperative multi-agent systems \cite{fridovich-keil_efficient_2020, le_cleach_lucidgames_2021, do_nascimento_game_2023}. The game-theoretic MPC formulation requires the solution of a generalized Nash equilibrium (GNE) at each time step to compute the control inputs: When the cost functions are quadratic and the dynamics and constraints are linear, the GNE problem can be written as an AVI of the form~\eqref{eq:avi}. As the finite-horizon control problem needs to be solved in real time, obtaining a fast solution is crucial to achieve a satisfactory controller sampling rate. A detailed derivation of the AVI formulation can be found in \cite{benenati_linear-quadratic_2026}.
\par To demonstrate the practical relevance of \texttt{DR-DAQP}, we apply it to the game-theoretic MPC benchmark in~\cite[\S 6.5]{bemporad2025nashopt}. We interface \texttt{DR-DAQP} via \texttt{LinearMPC.jl}\footnote{\texttt{https://github.com/darnstrom/LinearMPC.jl}} and compare against \texttt{NashOpt}~\cite{bemporad2025nashopt}, which formulates the equilibrium problem as a mixed-integer program (MIP) solved by HiGHS. Figure~\ref{fig:nashopt} shows the solve time per MPC step for a constraint horizon of $10$. The AVI formulation solved by \texttt{DR-DAQP} is several orders of magnitude faster than the MIP approach used by \texttt{NashOpt}, even when the latter uses a reduced constraint horizon of~$3$ to simplify the problem. This result highlights the fundamental advantage of the AVI formulation: it avoids the combinatorial complexity of MIP by directly exploiting the structure of the equilibrium conditions.
 
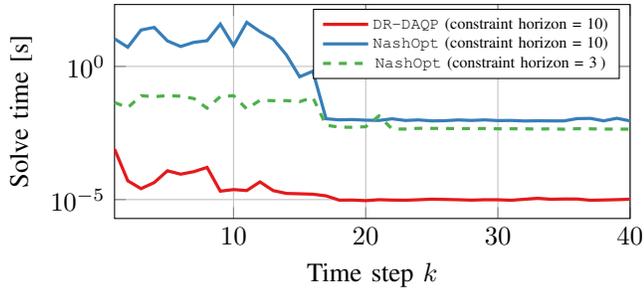
\begin{figure}[H]
    \centering
    \begin{tikzpicture}[scale=1]
        \pgfplotstableread{data/nashopt_mpc.dat}{\nashred}
        \pgfplotstableread{data/nashopt_mpc_full.dat}{\nashfull}
        \pgfplotstableread{data/lmpc_mpc.dat}{\lmpcdaqp}
        \begin{axis}[
            xmin=1,xmax=40,
            ymode=log,
            xlabel={Time step $k$},
            ylabel={Solve time [s]},
            legend style={nodes={scale=0.6, transform shape}},
            ymajorgrids,yminorgrids,xmajorgrids,
            x post scale=1.0,
            y post scale=0.5,
            ]
            \addplot [set19c1,very thick] table [x={k}, y={tlmpc}] {\lmpcdaqp}; 
            \addplot [set19c2,very thick] table [x={k}, y={tnashopt}] {\nashfull}; 
            \addplot [set19c3,very thick,dashed] table [x={k}, y={tnashopt}] {\nashred}; 
            \legend{\texttt{DR-DAQP} (constraint horizon = 10),
                \texttt{NashOpt} (constraint horizon = 10),
                \texttt{NashOpt} (constraint horizon = 3\:),
            }
        \end{axis}
    \end{tikzpicture}
    \caption{Solve time for the game theoretic MPC controller in \cite[\S 6.5]{bemporad2025nashopt} over the system evolution. Due to active constraints, the solution time is higher at the beginning of the simulation, and stabilizes when the state approaches an attractor. }
    \label{fig:nashopt}
\end{figure}

\subsection{Application to autonomous driving}
We test \texttt{DR-DAQP} on an autonomous overtake maneuver, implemented via a receding-horizon game-theoretic controller. The two agents are modeled as discretized unicycles \cite[Ex. 9.17]{sastry_nonlinear_2013} with sampling rate $10$Hz. The objectives of the two vehicles are designed to track a constant reference speed $(\overline{v}_i)_{i=1,2}$ and a time-varying reference lateral position $(\overline{l}_i)_{i=1,2}$:
\begin{align*}
J_1 &= \textstyle\sum_{t=1}^{T} \|v_1(t) - \overline{v}_1\|^2 + \|l_1 - \overline{l}_1(t)\|^2 + \|u_1(t)\|^2_{R} \\
J_2 &= \textstyle\sum_{t=1}^{T} \|v_2(t) - \overline{v}_2\|^2 + \|l_2 - \overline{l}_2(t)\|^2 + \|u_2(t)\|^2_{R},
\end{align*}
where $v_i$ and $l_i$ denote the speed and lateral position of vehicle $i$, and $u_i$ denotes its input (that is, acceleration and steering angle).
We set $\overline{v}_1 = 15 \mathrm{\frac{m}{s}},~\overline{v}_2 = 21 \mathrm{\frac{m}{s}}$, and $\overline{l}_1$ to maintain the right lane for all $t$. Instead, $\overline{l}_2(t)$ is set to track the left lane if  $9\geq p_1(t) - p_2(2) \geq 0 $, where $p_i$ denotes the position  of agent $i$, and the right lane otherwise. We impose saturation constraints on $v_i, l_i, u_i$, as well as non-convex collision-avoidance constraints
\begin{equation} \label{eq:collision_constraint}
    \frac{(p_1 - p_2)^2}{\Delta p^2_{\text{min}}} + \frac{(l_1 - l_2)^2}{\Delta l^2_{\text{min}}} \leq 1.
\end{equation}
\par The model presents a nonlinear dynamics and constraint \eqref{eq:collision_constraint}. Inspired by \cite{fridovich-keil_efficient_2020}, at each time-step we perform a linear approximation of the constraints and dynamics around the trajectory computed at the previous time-step, thus obtaining a linear-quadratic dynamic game that we recast as the AVI in \eqref{eq:avi}. Using \texttt{DR-DAQP} to solve the approximated problems with an horizon $T=10$, the system successfully completes the maneuver with an AVI solution time of $2.1\cdot 10^{-4}\pm 4.1\cdot10^{-5} s$, thus enabling controller sampling rates on the order of the kHz.
\begin{figure}
    \centering
    \includegraphics[width=\linewidth, trim=0 2.5cm 7cm 2cm, clip]{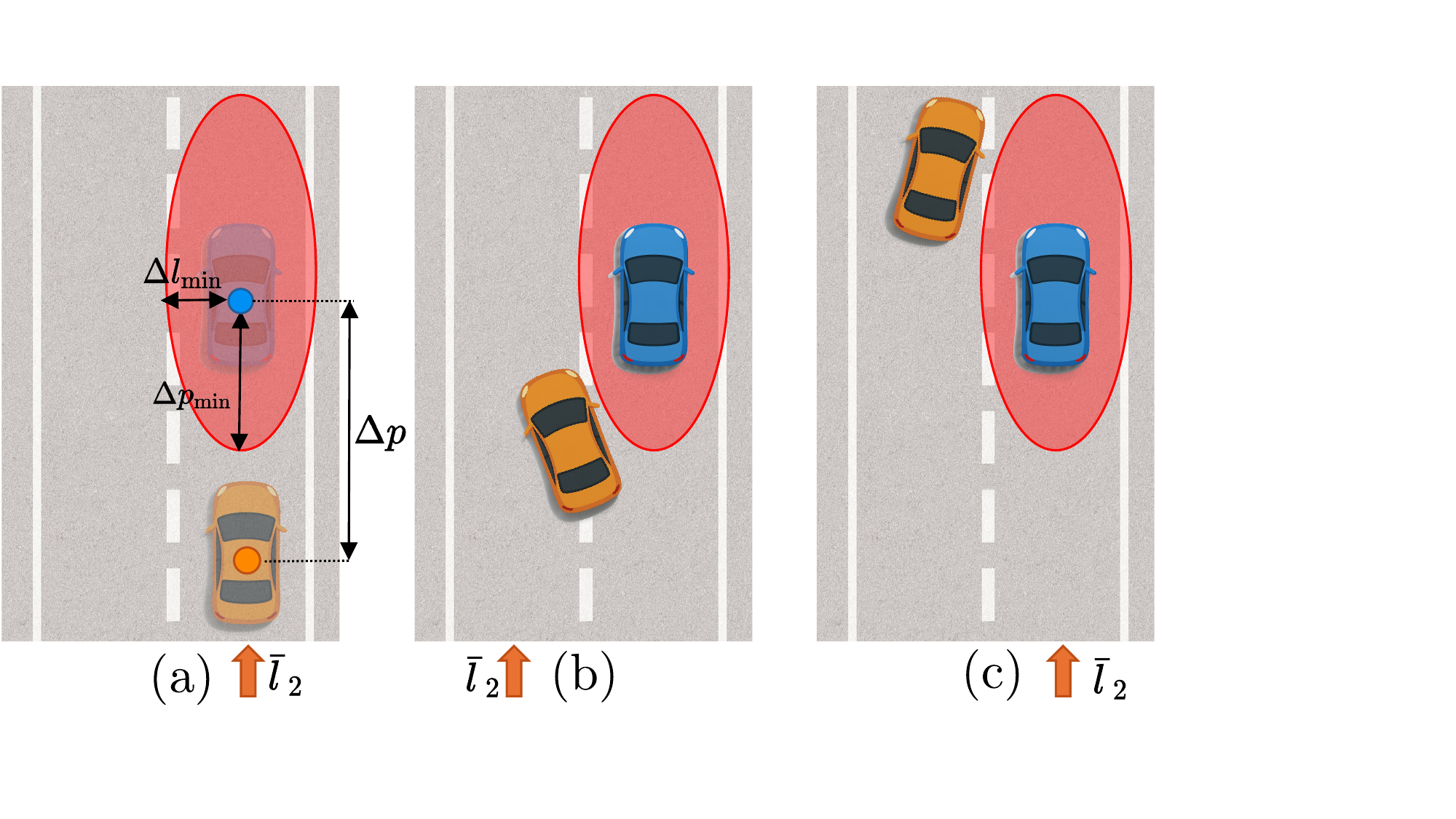}
    \caption{Simulated driving scenario. The safety distance constraint is drawn in red. When $\Delta p$ is below a threshold, $\bar{l}_2$ is set to the left lane \emph{(b)}. When $\Delta p<0$, $\bar{l}_2$ is set again to the right lane to complete the overtake \emph{(c)}. \\
    \emph{Video:} {\footnotesize\url{https://youtu.be/xoxPtcbNL3w}}}
    \label{fig:placeholder}
\end{figure}
\section{Conclusion}\label{sec:conclusion}

We presented \texttt{DR-DAQP}, a hybrid solver for affine variational inequalities that combines the global convergence of Douglas-Rachford splitting with the finite-termination capability of active-set methods. By leveraging the fixed QP structure of the DR subproblems and the warm-starting capabilities of the \texttt{DAQP} solver, the proposed method achieves significant computational speedups, up to two orders of magnitude compared to the state-of-the-art solver \texttt{PATH} on random AVIs, and several orders of magnitude compared to mixed-integer approaches on game-theoretic MPC problems.

Future work includes extending the method to handle more general variational inequality structures (e.g., non-polyhedral constraint sets), investigating adaptive strategies for the regularization parameter~$\rho$, and integrating \texttt{DR-DAQP} into real-time control frameworks for multi-agent systems.

\section{Appendix}
\paragraph{Proof of Theorem \ref{thm:convergence}}  We argue that the Newton-step modifications (Steps~\ref{step:as-test}-\ref{step:heur-update}) do not compromise the convergence of the underlying DR iteration. \\
    Let $(y_k, z_k)_{k\in\mathbb{N}}$ be a sequence generated by Algorithm \ref{alg:dr-daqp}. We define the subset of iteration steps $\mc K_N \subseteq \mathbb{N}$ such that, at iteration $k\in\mc K_N$, Algorithm \ref{alg:dr-daqp} applies the Newton-type update, that is, Step \ref{step:heur-update}. Then, the condition in Step \ref{step:dr_condition} is verified for any $k\in\mc K_N$, which by definition of $\delta$ in Step \ref{step:heur-update} immediately leads to
    \begin{equation}\label{eq:newton_accepted_condition}
        \|\tilde{y}_k - \tilde{z}_k\| < \|\tilde{y}_p - \tilde{z}_p\| \qquad \forall k,p\in \mc K_N~\text{s.t.}~p< k.
    \end{equation}
    In words, the merit function $\|\tilde{y}_k - \tilde{z}_k\|$ for any iteration $k\in\mc K_N$ is strictly less than the merit function for any previous iteration $p\in \mc K_N$. We note that, for each $k\in\mc K_N$, the associated $\tilde{z}_k$ resulting from Step \ref{step2:qp-solve} is the primal solution to the system of equations in \eqref{eq:kkt-system}, which can be shown to be unique\footnote{Sketch of proof: Multiple solutions of \eqref{eq:kkt-system} exist with different $x$ if there is $\tilde{\lambda}$ such that $H^{-1}A_{\mc A}^\top \tilde{\lambda}\neq0$ and $A_{\mc A}H^{-1}A_{\mc A}^\top \tilde{\lambda}=0$. They cannot both hold, since a symmetric matrix and its square root have the same kernel.} and it is in turn uniquely determined by the active set $\mc A_k$. Moreover, $\tilde{y}_k$ resulting from Step \ref{step:QP} is the unique solution to a QP identified by $\tilde{z}_k$. As a consequence, there is a unique pair of  $\tilde{z}_k, \tilde{y}_k$ for each combination of constraints $\mc A_k$. As the number of possible combinations of constraints is finite, the condition in \eqref{eq:newton_accepted_condition} can only be satisfied a finite number of times and, consequently, $\mc K_N$ is a finite set. Therefore, for each $k>\bar{k}$, where $\bar{k}$ is the maximum element of $\mc K_N$, Algorithm \ref{alg:dr-daqp} applies the Douglas-Rachford iteration (Steps \ref{step2:qp-solve} and \ref{step2:DR_update}), which converges linearly to the solution of the AVI \cite[Prop. 8]{eckstein1998operator}. \qedd 
    \paragraph{Proof of Theorem \ref{thm:finite}} Let us first introduce some preparatory results and notation. \\
    We denote $x^*, \lambda^*$ the primal-dual solution to the AVI in \eqref{eq:avi} (unique under Assumptions \ref{ass:pd}, \ref{ass:licq}). Let us define the functions $\mc S_{\text{x}}(z)$ and $\mc S_{\lambda}(z)$ that map the parameter $z$ respectively to the primal and dual solution of the QP in \eqref{eq:dr-qp}. Moreover, 
    we denote as $\mc A_{\text{QP}}(z)$ the set of active constraints at the solution of \eqref{eq:dr-qp}, that is,
    \begin{equation}
        A_i\mc S_{\text{x}}(z) = b_i ~\iff~ i\in\mc A_{\text{QP}}(z).
    \end{equation}
    Under Assumption \ref{ass:pd}, the primal solution is unique, thus $\mc S_\text{x}$ is well-defined. Uniqueness of the QP's dual solution is ensured by the following Lemma:
    \begin{lemma}[Transfer of regularity]\label{lem:transfer} Under Assumptions~\ref{ass:licq}, ~\ref{ass:scs}:
    \begin{enumerate}
        \item $\mc S_\text{x}(x^*) = x^*;~\mc S_\lambda(x^*) = \lambda^*$
        \item The QP in \eqref{eq:dr-qp} satisfies LICQ and strict complementarity.
    \end{enumerate}
\end{lemma}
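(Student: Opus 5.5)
Writing out the QP KKT conditions and substituting $z = x^*$ reduces the lemma to the AVI KKT conditions \eqref{eq:kkt}. The QP \eqref{eq:dr-qp} with parameter $z$ has a strongly convex objective, since $\tilde H = \rho I + H_s \succ 0$, and polyhedral constraints. Its KKT conditions are therefore necessary and sufficient, and they read
\begin{equation*}
\tilde H x + \tilde f(z) + A^\top \mu = 0, \qquad 0 \le \mu \perp b - Ax \ge 0 .
\end{equation*}

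For item 1, substitute $z = x^*$ and the candidate $x = x^*$. By \eqref{eq:ftilde},
\begin{equation*}
\tilde H x^* + \tilde f(x^*) = \tilde H x^* + f + (H - \tilde H)x^* = Hx^* + f .
\end{equation*}
The QP stationarity condition at $(x^*,\mu)$ is therefore exactly \eqref{eq:kkt:stat}, and the complementarity condition is exactly \eqref{eq:kkt:compl}. It follows that $(x^*,\lambda^*)$ satisfies the QP KKT system for $z = x^*$. Because the QP has a unique primal minimizer, $\mc S_{\text{x}}(x^*) = x^*$, and consequently $\mc A_{\text{QP}}(x^*) = \mc A$.

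The identity $\mc S_\lambda(x^*) = \lambda^*$ also requires uniqueness of the QP multiplier. The multiplier set for the QP at $z = x^*$ consists of all $\mu \ge 0$ with $\mu_i = 0$ for $i \in \overline{\mc A}$ and
\begin{equation*}
A_{\mc A}^\top \mu_{\mc A} = -(Hx^* + f).
\end{equation*}
This is the same set as the AVI multiplier set. By Assumption~\ref{ass:licq}, $A_{\mc A}$ has full row rank, so this set is the singleton $\{\lambda^*\}$.

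For item 2, both properties are evaluated at the QP solution for $z = x^*$, which is the point relevant for the finite-termination argument. LICQ for the QP concerns the rows of $A$ indexed by $\mc A_{\text{QP}}(x^*) = \mc A$, so it is literally Assumption~\ref{ass:licq}. Strict complementarity for the QP requires $\mu_i > 0$ for all $i \in \mc A_{\text{QP}}(x^*)$. Since $\mu = \lambda^*$ and $\mc A_{\text{QP}}(x^*) = \mc A$, this is exactly Assumption~\ref{ass:scs}.

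The main obstacle is interpretive rather than technical: the statement must be read as holding at the parameter $z = x^*$. If it were intended for general $z$, it would be false, since active sets change with $z$. If a neighborhood version is needed later, I would obtain it by combining this pointwise result with standard sensitivity theory for strongly convex QPs under LICQ and strict complementarity. That theory gives, for $z$ near $x^*$, $\mc A_{\text{QP}}(z) = \mc A$ with strictly positive multipliers, by continuity of the solution of the reduced KKT system.
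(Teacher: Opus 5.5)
Your proof is correct and follows the paper's argument: you substitute $z = x^*$ into the QP KKT conditions, note that $\tilde H x^* + \tilde f(x^*) = Hx^* + f$ so that $(x^*,\lambda^*)$ satisfies them, and then read off LICQ and strict complementarity from the fact that the active set and multiplier coincide with those of the AVI. Your explicit use of LICQ to show the QP multiplier is unique, and your remark that item 2 should be read at $z = x^*$, make precise two points the paper leaves implicit.
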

\begin{proof}
    The KKT conditions for the QP in \eqref{eq:dr-qp} with parameter $z=x^*$ are 
    \begin{subequations} \label{eq:kkt_qp}
        \begin{align}
            &\tilde{H}x + f + (H-\tilde{H})x^* + A^T\lambda = 0, \\
            &0 \le \lambda \perp (b-Ax) \ge 0
        \end{align}
    \end{subequations}
    By substituting $x \leftarrow x^*$, $\lambda \leftarrow \lambda^*$ and comparing with the KKT conditions of the AVI in \eqref{eq:avi}, it is immediate to verify that $(x^*, \lambda^*)$ satisfy \eqref{eq:kkt_qp}, and thus it is a primal-dual solution to the QP problem in \eqref{eq:dr-qp}. 
    Since the constraints and the solution are identical for both the AVI and the QP, LICQ and strict complementarity of the QP directly follow from the assumptions.
\end{proof}  
We now show that Step \ref{step2:qp-solve} identifies the active set $\mc A$ defined in \eqref{eq:active_set} when $z_k$ lies in a neighborhood of the AVI solution. 
\begin{lemma} \label{le:critical_region}
    For all $z$ in a sufficiently small neighborhood of $x^*$, $\mc A_{\text{QP}}(z)\equiv \mc A$.
\end{lemma}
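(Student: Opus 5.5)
We will show that near $z=x^*$ the active set of the QP~\eqref{eq:dr-qp} stays equal to $\mc A$. The argument is the standard stability of the active set for a parametric QP under LICQ and strict complementarity, applied at the point supplied by Lemma~\ref{lem:transfer}. The QP depends on $z$ only through the linear term $\tilde f(z)=f+(H-\tilde H)z$, which is affine and hence continuous in $z$. Its Hessian $\tilde H=\rho I+H_s$ is positive definite by Assumption~\ref{ass:pd}.

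\textbf{Step 1: candidate solution.} For $z$ near $x^*$, define $(x(z),\lambda_{\mc A}(z))$ as the solution of the equality-constrained KKT system
\[
\begin{bmatrix}\tilde H & A_{\mc A}^\top\\ A_{\mc A} & 0\end{bmatrix}\begin{bmatrix}x\\ \lambda_{\mc A}\end{bmatrix}=\begin{bmatrix}-\tilde f(z)\\ b_{\mc A}\end{bmatrix},
\]
and set $\lambda_{\overline{\mc A}}(z)=0$. The system matrix is nonsingular because $\tilde H\succ 0$ and $A_{\mc A}$ has full row rank by Assumption~\ref{ass:licq}. Hence $(x(z),\lambda_{\mc A}(z))$ is an affine, and therefore continuous, function of $z$. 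By Lemma~\ref{lem:transfer}, at $z=x^*$ this solution is $(x^*,\lambda^*_{\mc A})$.

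\textbf{Step 2: the candidate is optimal nearby.} We use the strict inequalities available at $z=x^*$. Assumption~\ref{ass:scs} gives $\lambda^*_i>0$ for $i\in\mc A$. The definition~\eqref{eq:active_set} gives $A_ix^*<b_i$ for $i\in\overline{\mc A}$. By continuity there is a neighborhood $\mc N$ of $x^*$ on which both $\lambda_i(z)>0$ for $i\in\mc A$ and $A_ix(z)<b_i$ for $i\in\overline{\mc A}$ hold. On $\mc N$ the pair $(x(z),\lambda(z))$ satisfies stationarity, primal feasibility, dual feasibility and complementarity for the QP~\eqref{eq:dr-qp}. Since the QP is strictly convex, these KKT conditions are sufficient, and its primal solution is unique. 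Therefore $\mc S_{\text x}(z)=x(z)$.

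\textbf{Step 3: identify the active set.} The active constraints at $x(z)$ are exactly those in $\mc A$: the rows in $\mc A$ hold with equality by construction, and the rows in $\overline{\mc A}$ are strictly inactive on $\mc N$. Hence $\mc A_{\text{QP}}(z)=\mc A$ for all $z\in\mc N$.

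The main obstacle is making sure the active set is pinned down exactly, not just approximately. Continuity of the solution map alone would only show that the active set near $x^*$ lies between $\mc A$ and some larger set. Two ingredients close that gap. Strict complementarity rules out a degenerate active constraint with zero multiplier, which could drop out under perturbation; in our construction such a constraint would still appear active, but the multiplier-sign argument would fail. LICQ guarantees that the reduced KKT matrix is nonsingular, so the multipliers are unique and depend continuously on $z$.
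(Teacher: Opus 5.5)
Your proof is correct and follows the paper's route: the paper cites the critical-region result of \cite{bemporad_explicit_2002} (affine primal-dual maps with active set $\mc A$ wherever $A_{\overline{\mc A}}\mc S_{\text{x}}(z)\le b_{\overline{\mc A}}$ and $[\mc S_\lambda(z)]_{\mc A}\ge 0$) and uses that these inequalities hold strictly at $z=x^*$, which is exactly what your reduced-KKT construction derives in a self-contained way. One small slip in your closing remark: continuity of $\mc S_{\text{x}}$ alone gives $\mc A_{\text{QP}}(z)\subseteq\mc A$ near $x^*$ (inactive constraints stay inactive), not a superset of $\mc A$, and strict complementarity is what rules out a strict subset.
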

\begin{proof}
Following Lemma \ref{lem:transfer}, $\mc S_{\text{x}}(x^*)=x^*$. From \cite[Theorem 2]{bemporad_explicit_2002} and subsequent discussion, $\mc S_{\text{x}}(z), \mc S_{\lambda}(z)$ are affine maps and $\mc A_{\text{QP}}(z) = \mc A$ for all $z$ such that
\begin{subequations} \label{eq:active_constraints_condition}
    \begin{align}
        A_{\overline{\mc A}}\mc S_{\text{x}}(z) &\leq b_{\overline{\mc A}},\\
\mc [S_{\lambda}(z)]_{\mc A}& \geq 0,
    \end{align}
\end{subequations}
 where we recall that $\overline{\mc A}$ is the complement of $\mc A$, and we denoted $[S_{\lambda}(z)]_{\mc A}$ the elements of $S_{\lambda}(z)$ with index in $\mc A$. As \eqref{eq:active_constraints_condition} is satisfied strictly for $z=x^*$ from Assumption \ref{ass:scs}, and following the linearity of $\mc S_{\text{x}}, \mc S_{\lambda}$, we conclude that \eqref{eq:active_constraints_condition} holds true in a neighborhood of $x^*$.  \end{proof}
We now proceed with the proof of the main statement:
\begin{proof}
    Since the sequence $\{z_k\}$ generated by Algorithm 2 converges to $x^*$ (Theorem~\ref{thm:convergence}), there exists a finite iteration $k$ such that $z_k$ lies in a neighborhood of $x^*$ where $\mc A_{\text{QP}}(z_k) = \mc A$. This neighborhood exists following Lemma~\ref{le:critical_region}. Consequently, Step~\ref{step2:qp-solve} of the algorithm returns $\mc A$ and Step \ref{step:kkt_system} solves \eqref{eq:kkt-system}, yielding $(x^*, \lambda^*)$ that solves the AVI in \eqref{eq:avi} and triggering termination in Step~\ref{step:as-terminate}.
\end{proof}

    
    

\bibliographystyle{IEEEtran}
\bibliography{lib.bib}

\begin{thebibliography}{10}
\providecommand{\url}[1]{#1}
\csname url@samestyle\endcsname
\providecommand{\newblock}{\relax}
\providecommand{\bibinfo}[2]{#2}
\providecommand{\BIBentrySTDinterwordspacing}{\spaceskip=0pt\relax}
\providecommand{\BIBentryALTinterwordstretchfactor}{4}
\providecommand{\BIBentryALTinterwordspacing}{\spaceskip=\fontdimen2\font plus
\BIBentryALTinterwordstretchfactor\fontdimen3\font minus \fontdimen4\font\relax}
\providecommand{\BIBforeignlanguage}[2]{{%
\expandafter\ifx\csname l@#1\endcsname\relax
\typeout{** WARNING: IEEEtran.bst: No hyphenation pattern has been}%
\typeout{** loaded for the language `#1'. Using the pattern for}%
\typeout{** the default language instead.}%
\else
\language=\csname l@#1\endcsname
\fi
#2}}
\providecommand{\BIBdecl}{\relax}
\BIBdecl

\bibitem{facchinei2003finite}
F.~Facchinei and J.-S. Pang, \emph{Finite-dimensional variational inequalities and complementarity problems}.\hskip 1em plus 0.5em minus 0.4em\relax Springer, 2003.

\bibitem{facchinei_generalized_2010}
F.~Facchinei and C.~Kanzow, ``Generalized {{Nash}} equilibrium problems,'' \emph{Annals of Operations Research}, vol. 175, no.~1, pp. 177--211, 2010.

\bibitem{fridovich-keil_efficient_2020}
D.~{Fridovich-Keil}, E.~Ratner, L.~Peters, A.~D. Dragan, and C.~J. Tomlin, ``Efficient {{Iterative Linear-Quadratic Approximations}} for {{Nonlinear Multi-Player General-Sum Differential Games}},'' in \emph{2020 {{IEEE International Conference}} on {{Robotics}} and {{Automation}} ({{ICRA}})}.\hskip 1em plus 0.5em minus 0.4em\relax Paris, France: IEEE, May 2020, pp. 1475--1481.

\bibitem{le_cleach_lucidgames_2021}
S.~Le~Cleac'h, M.~Schwager, and Z.~Manchester, ``{{LUCIDGames}}: {{Online Unscented Inverse Dynamic Games}} for {{Adaptive Trajectory Prediction}} and {{Planning}},'' \emph{IEEE Robotics and Automation Letters}, vol.~6, no.~3, pp. 5485--5492, 2021.

\bibitem{do_nascimento_game_2023}
A.~A. Do~Nascimento, A.~Papachristodoulou, and K.~Margellos, ``A {{Game Theoretic Approach}} for {{Safe}} and {{Distributed Control}} of {{Unmanned Aerial Vehicles}},'' in \emph{2023 62nd {{IEEE Conference}} on {{Decision}} and {{Control}} ({{CDC}})}.\hskip 1em plus 0.5em minus 0.4em\relax Singapore, Singapore: IEEE, 2023, pp. 1070--1075.

\bibitem{fieni_game_2025}
G.~Fieni, M.-P. Neumann, F.~Furia, A.~Caucino, A.~Cerofolini, V.~Ravaglioli, and C.~H. Onder, ``Game {{Theory}} in {{Formula}} 1: {{Multi-agent Physical}} and {{Strategical Interactions}},'' \emph{arXiv preprint arXiv.2503.0542}, Mar. 2025.

\bibitem{zhu_sequential_2023}
E.~L. Zhu and F.~Borrelli, ``A {{Sequential Quadratic Programming Approach}} to the {{Solution}} of {{Open-Loop Generalized Nash Equilibria}},'' in \emph{2023 {{IEEE International Conference}} on {{Robotics}} and {{Automation}} ({{ICRA}})}.\hskip 1em plus 0.5em minus 0.4em\relax London, United Kingdom: IEEE, May 2023, pp. 3211--3217.

\bibitem{cao1996pivotal}
M.~Cao and M.~C. Ferris, ``A pivotal method for affine variational inequalities,'' \emph{Mathematics of Operations research}, vol.~21, no.~1, pp. 44--64, 1996.

\bibitem{dirkse1995path}
S.~P. Dirkse and M.~C. Ferris, ``The path solver: a nommonotone stabilization scheme for mixed complementarity problems,'' \emph{Optimization methods and software}, vol.~5, no.~2, pp. 123--156, 1995.

\bibitem{kim2018structure}
Y.~Kim, O.~Huber, and M.~C. Ferris, ``A structure-preserving pivotal method for affine variational inequalities,'' \emph{Mathematical Programming}, vol. 168, no.~1, pp. 93--121, 2018.

\bibitem{murty2009computational}
K.~G. Murty, ``Computational complexity of complementary pivot methods,'' in \emph{Complementarity and fixed point problems}.\hskip 1em plus 0.5em minus 0.4em\relax Springer, 2009, pp. 61--73.

\bibitem{eckstein1998operator}
J.~Eckstein and M.~C. Ferris, ``Operator-splitting methods for monotone affine variational inequalities, with a parallel application to optimal control,'' \emph{INFORMS Journal on Computing}, vol.~10, no.~2, pp. 218--235, 1998.

\bibitem{baghbadorani2025douglas}
R.~R. Baghbadorani, E.~Benenati, and S.~Grammatico, ``A douglas-rachford splitting method for solving monotone variational inequalities in linear-quadratic dynamic games,'' \emph{arXiv preprint arXiv:2504.05757}, 2025.

\bibitem{mignoni_monviso_2025}
\BIBentryALTinterwordspacing
N.~Mignoni, R.~R. Baghbadorani, R.~Carli, P.~M. Esfahani, M.~Dotoli, and S.~Grammatico, ``monviso: {A} {Python} {Package} for {Solving} {Monotone} {Variational} {Inequalities},'' in \emph{2025 {European} {Control} {Conference} ({ECC})}.\hskip 1em plus 0.5em minus 0.4em\relax IEEE, Jun. 2025, pp. 1708--1713. [Online]. Available: \url{https://ieeexplore.ieee.org/document/11187100/}
\BIBentrySTDinterwordspacing

\bibitem{liu2024input}
M.~Liu and I.~Kolmanovsky, ``Input-to-state stability of {Newton} methods in nash equilibrium problems with applications to game-theoretic model predictive control,'' \emph{arXiv preprint arXiv:2412.06186}, 2024.

\bibitem{benenati2025explicit}
E.~Benenati and G.~Belgioioso, ``The explicit game-theoretic linear quadratic regulator for constrained multi-agent systems,'' \emph{arXiv preprint arXiv:2512.07749}, 2025.

\bibitem{arnstrom2024high}
D.~Arnstr{\"o}m and D.~Axehill, ``A high-performant multi-parametric quadratic programming solver,'' in \emph{2024 IEEE 63rd Conference on Decision and Control (CDC)}.\hskip 1em plus 0.5em minus 0.4em\relax IEEE, 2024, pp. 303--308.

\bibitem{arnstrom2022dual}
D.~Arnstr{\"o}m, A.~Bemporad, and D.~Axehill, ``A dual active-set solver for embedded quadratic programming using recursive {LDL}$^t$ updates,'' \emph{IEEE Transactions on Automatic Control}, vol.~67, no.~8, pp. 4362--4369, 2022.

\bibitem{bemporad2025nashopt}
A.~Bemporad, ``{NashOpt}-a {Python} library for computing generalized {Nash} equilibria,'' \emph{arXiv preprint arXiv:2512.23636}, 2025.

\bibitem{goulart2024clarabel}
P.~J. Goulart and Y.~Chen, ``{Clarabel}: An interior-point solver for conic programs with quadratic objectives,'' \emph{arXiv preprint arXiv:2405.12762}, 2024.

\bibitem{aghassi2006solving}
M.~Aghassi, D.~Bertsimas, and G.~Perakis, ``Solving asymmetric variational inequalities via convex optimization,'' \emph{Operations Research Letters}, vol.~34, no.~5, pp. 481--490, 2006.

\bibitem{benenati_linear-quadratic_2026}
E.~Benenati and S.~Grammatico, ``Linear-{{Quadratic Dynamic Games}} as {{Receding-Horizon Variational Inequalities}},'' \emph{IEEE Transactions on Automatic Control}, pp. 1--16, 2026 (accepted).

\bibitem{sastry_nonlinear_2013}
S.~S. Sastry, \emph{Nonlinear Systems: Analysis, Stability, and Control}.\hskip 1em plus 0.5em minus 0.4em\relax Springer Science \& Business Media, 2013.

\bibitem{bemporad_explicit_2002}
A.~Bemporad, M.~Morari, V.~Dua, and E.~N. Pistikopoulos, ``The explicit linear quadratic regulator for constrained systems,'' \emph{Automatica}, vol.~38, pp. 3--20, 2002.

\end{thebibliography}
\end{document}